\newcommand{\lungh}{w}
\newcommand{\dist}{\mathrm{dist}}
\newcommand{\mincut}{\mathrm{minCut}}
\newcommand{\maxflow}{\mathrm{maxFlow}}
\newcommand{\flowvit}{\mathrm{flowVit}}
\newcommand{\RRe}{\rm I\!R}
\newcommand{\edge}[2]{(#1,#2)}
\newcommand{\commento}[1] {}
\newcommand{\comment}[1] {}
\newtheorem{theorem}{Theorem}
\newtheorem{lemma}[theorem]{Lemma}
\newtheorem{corollary}[theorem]{Corollary}
\newtheorem{proposition}{Proposition}
\newtheorem{definition}{Definition}
\newcommand{\qed}{\hfill \ensuremath{\Box}}
\newenvironment{proof}{\vspace{1ex}\noindent{\bf Proof.}\hspace{0.5em}}
	{\hfill\qed\vspace{2ex}}
\newcounter{progcount}
\newcounter{linecount}[progcount]
\begin{document}

\title{Max flow vitality in general and $st$-planar graphs}

\author{Giorgio Ausiello\footnote{Dipartimento di Ingegneria Informatica, Automatica e Gestionale, Universit\`a
di Roma ``La Sapienza'', via Ariosto 25, 00185 Roma, Italy. Email: \texttt{ausiello@diag.uniroma1.it}.}
\and 
Paolo G. Franciosa\footnote{Dipartimento di Scienze Statistiche, Universit\`a di Roma ``La Sapienza'',
piazzale Aldo Moro 5, 00185 Roma, Italy. Email: \texttt{paolo.franciosa@uniroma1.it, isabella.lari@uniroma1.it}.} 
\and 
Isabella Lari\footnotemark[2]
\and 
Andrea Ribichini\footnote{Dipartimento di Fisica, Universit\`a
di Roma ``La Sapienza'', piazzale Aldo Moro 5, 00185 Roma, Italy. E-mail: \texttt{ribichini@diag.uniroma1.it}.}
}

\date{}

\maketitle

\begin{abstract}
The \emph{vitality} of an arc/node of a graph with respect to the maximum flow between two fixed nodes $s$ and $t$ is defined as the reduction of the maximum flow caused by the removal of that arc/node.
In this paper we address the issue of determining the vitality of arcs and/or nodes for the maximum flow problem. We show how to compute the vitality of all arcs in a general undirected graph by solving only $2(n-1)$ max flow instances and, In $st$-planar graphs (directed or undirected) we show how to compute the vitality of all arcs and all nodes in $O(n)$ worst-case time. Moreover, after determining the vitality of arcs and/or nodes, and given a planar embedding of the graph, we can determine the vitality of a ``contiguous'' set of arcs/nodes in time proportional to the size of the set.
\end{abstract}
\vspace{1cm}
\textbf{Keywords:} maximum flow, minimum cut, vitality, general graphs, planar graphs, fault resiliency.

\section{Introduction}\label{se:intro}
Given a graph with capacities associated to arcs, and given two special nodes $s$ and $t$, the problem of determining the maximum flow (max-flow) that can be transferred from $s$ to $t$ has been deeply studied since the 1950's. Here we present algorithms for computing how the maximum flow is influenced by the removal of any single arc, or any single node, or in some cases by the simultaneous removal of a set of arcs/nodes.
This is a special case of the vitality concept:
given a real-valued function $f(G)$ of a graph $G$,
the \emph{vitality} of a resource $x$ of the graph is usually defined as the value $|f(G) - f(G\setminus x)|$, where $G \setminus x$ denotes the graph after the removal of resource $x$.
Vitality can be seen as a \emph{centrality index}, as defined in~\cite{dagstuhl}.

The vitality of arcs and nodes in a graph has been studied with respect to the distance between two fixed nodes $x,y$~\cite{Bar-noy,CorleySha,Lin,Malik,McMasters,NardelliEdge,NardelliNode,Stahlberg}.
In this case, the vitality of an arc measures the distance increase between $x$ and $y$ when the arc is removed, and obviously only arcs on a shortest path from $x$ to $y$ may have vitality greater than zero.
In~\cite{Ausiello}, the problem of determining a spanner of a graph with the additional constraint of preserving the vitality of arcs with respect to distances has been addressed.

Vitality of arcs with respect to max-flow has been studied since 1963, only a few years after the seminal paper by Ford and Fulkerson~\cite{Fulkerson} in 1956.
Wollmer~\cite{Wollmer63} presented a method for determining the most vital link (i.e., the arc with maximum vitality) in a railway network.
A more general problem has been studied in~\cite{Ratliff}, where an algorithm based on an enumerative approach is proposed for finding the $k$ arcs whose simultaneous removal causes the largest decrease in max-flow. Wood~\cite{Wood} has shown that this problem is NP-hard in the strong sense, while its approximability has been studied in~\cite{Altner,PhillipsSTOC93}.
Corley and Chang~\cite{Corley} have shown that removing nodes can be reduced to removing arcs in a transformed network. A recent survey on vitality of arcs and related problems is in~\cite{Alderson}.

A slightly different problem consists in determining a ``robust'' flow assignment, i.e., a flow assignment in which the flow loss due to the removal of one arc is minimized.
In this setting, flow is not ``re-routed'' due to the arc removal.
Aneja, Chandrasekaran and Nair~\cite{Aneja} propose a strongly polynomial solution based on Linear Programming, while the extension of this problem to the removal of $k$ arcs has been shown to be NP-hard in~\cite{Du}, even for $k=2$.
Recently, the most vital arc or set of $k$ arcs in a flow network which carries flow over time has been studied in~\cite{MOROWATI}.

\comment{\subsection{Our results}
}
Despite the abundant literature on most vital arcs, the problem of efficiently determining the vitality of all arcs or all nodes has not been addressed yet.

In this paper we tackle this problem for general undirected graphs and for $st$-planar graphs. In particular, we obtain the following results ($n$ is the number of nodes of the graph, $m$ is the number of arcs):

\begin{description}
\item[general undirected:] vitality of all arcs in $O(n \cdot \mbox{MF}(n,m))$ time, where $\mbox{MF}(n,m)$ is the time needed to solve a max-flow instance on an undirected graph with $n$ nodes and $m$ arcs;

\item[$st$-planar, both undirected and directed:] vitality of all nodes and all arcs in $O(n)$ time. Moreover, fixing an $st$-planar embedding and after $O(n)$ preprocessing time using $O(n)$ space, we retrieve the vitality of any contiguous
set of $k$ arcs in $O(k)$ time.
\end{description}

Observe that even simple problems concerning maximum flow vitality are at least as hard as the maximum flow problem itself. In fact, the following relations hold:
\begin{proposition}\label{rem:vitality}
Computing the vitality of a single arc is equivalent, in the worst case, to computing the value of the maximum flow.
\end{proposition}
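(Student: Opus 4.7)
The proposition asserts a two-sided equivalence in worst-case complexity, so the plan is to establish both reductions with only $O(1)$ overhead in the graph size so that the complexities coincide up to constant factors.

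For the easy direction, I would appeal directly to the definition of vitality: for any arc $e$, $\flowvit(e) = \maxflow(G) - \maxflow(G \setminus \{e\})$, so computing the vitality of a single arc costs at most two max-flow computations on graphs of essentially the same size as $G$.

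For the converse direction, I would reduce an arbitrary max-flow instance $(G,s,t)$ to a single vitality query on a slightly enlarged graph. Build $G'$ by adjoining a new node $s'$ to $G$ together with a single arc (or undirected edge) $e = (s',s)$ of capacity $C$ strictly larger than the total capacity of the arcs incident to $s$ in $G$; take $s'$ and $t$ as the source and sink in $G'$. Every flow from $s'$ to $t$ in $G'$ must traverse $e$, but by the choice of $C$ the capacity of $e$ never binds, so $\maxflow(G') = \maxflow(G)$. Removing $e$ disconnects $s'$ from $t$, giving $\maxflow(G' \setminus \{e\}) = 0$, hence $\flowvit(e) = \maxflow(G)$ in $G'$. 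A single vitality computation on a graph with $n+1$ nodes and $m+1$ arcs therefore returns the original max-flow value.

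The one step that deserves a moment of care is the equality $\maxflow(G') = \maxflow(G)$; I would argue this via min-cut, observing that any cut of $G'$ that separates $s$ from $s'$ has capacity at least $C > \maxflow(G)$, so the minimum cut keeps $s$ and $s'$ on the same side and therefore restricts to a minimum $s$-$t$ cut of $G$. Combining the two reductions establishes the claimed worst-case equivalence.
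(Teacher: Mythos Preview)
Your proof is correct and essentially matches the paper's argument: the paper uses the same two reductions, the only cosmetic difference being that it adjoins the new node and high-capacity arc on the sink side (adding $t'$ and $e=(t,t')$) rather than on the source side as you do. Your min-cut justification for $\maxflow(G')=\maxflow(G)$ is a bit more explicit than the paper's, but otherwise the approaches coincide.
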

\begin{proof}
The vitality of $e$ is computed by definition as the maximum flow in $G$ minus the maximum flow in $G\backslash e$. On the other hand, given a  graph $G$ with terminal nodes $s$ and $t$, build a graph $G'$ by adding a new node $t'$ and an arc $e=(t, t')$ with large capacity (e.g., larger than the sum of all capacities in $G$). The maximum flow from $s$ to $t'$ equals the flow through $e$. If we compute the vitality of arc $e$ w.r.t.\ the maximum flow from $s$ to $t'$, then we also obtain the maximum flow value from $s$ to $t$.
\end{proof}
\begin{proposition}
Deciding whether a fixed arc is a most vital arc is at least as difficult as deciding whether the value of the maximum flow is greater than a given value.
\end{proposition}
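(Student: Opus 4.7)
The plan is to adapt the reduction from Proposition~\ref{rem:vitality} so that the question ``is the fixed arc $e^*$ a most vital arc of $G'$?'' encodes the threshold comparison ``is $\maxflow(G,s,t)>k$?''. Given an instance $(G,s,t,k)$ of the max-flow decision problem, I would build $G'$ by (i)~adding a fresh sink $t'$; (ii)~inserting an arc $e^*=(t,t')$ with capacity $C$ strictly larger than the sum of all arc capacities in $G$; and (iii)~inserting an arc $e'=(s,t')$ of capacity exactly $k+1$. The two ways of routing flow from $s$ to $t'$ in $G'$ --- through $G$ and then $e^*$, or directly via $e'$ --- are arc-disjoint, so $\maxflow(G',s,t')=F+(k+1)$ where $F=\maxflow(G,s,t)$.

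The heart of the argument is a case analysis of arc vitalities in $G'$. Removing $e^*$ disconnects $t$ from $t'$ and leaves only $e'$, giving vitality $F$; removing $e'$ leaves only the route through $G$ and $e^*$, giving vitality $k+1$; and for any arc $e$ of $G$, the direct arc $e'$ is untouched while the other route is limited by $\maxflow(G\setminus e, s, t)$, so the vitality of $e$ in $G'$ coincides with its vitality in $G$ and is therefore at most $F$. Consequently $e^*$ achieves the maximum vitality in $G'$ if and only if $F\ge k+1$, i.e., if and only if $\maxflow(G,s,t)>k$, completing the reduction from max-flow decision to ``is $e^*$ a most vital arc?''.

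I do not expect a serious obstacle: the construction is trivially polynomial in the size of the input. The one point that requires care is ensuring that arcs of $G$ do not acquire larger vitality in $G'$ than they had in $G$; this is why it is essential that $C$ dominates all other capacities (so the route through $G$ is bottlenecked only inside $G$) and why $e^*$ is the sole arc leaving $t$ toward $t'$. The choice of $k+1$ for the capacity of $e'$ is what converts the natural ``$\ge$'' threshold produced by the construction into the strict inequality ``$>k$'' stated in the proposition.
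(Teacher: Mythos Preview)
Your construction and analysis are correct and essentially identical to the paper's own proof: the paper too adds a fresh sink $t'$, the large-capacity arc $e=(t,t')$, and a direct arc $g=(s,t')$ whose capacity equals the threshold value, then compares the vitality of $e$ (which equals $\maxflow(G,s,t)$) against that of $g$. Your write-up is in fact more careful in explicitly checking that arcs of $G$ cannot have larger vitality in $G'$; the only minor caveat is that your use of $k+1$ to turn ``$\ge$'' into ``$>$'' tacitly assumes integer capacities, whereas the paper simply takes the threshold itself as the capacity of the direct arc.
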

\begin{proof}
Consider the graph $G'$ defined as in Proposition~\ref{rem:vitality} above, and add to $G'$ an arc $g=(s, t')$ with capacity $F$. Arc $e$ is a most vital arc w.r.t.\ the maximum flow from $s$ to $t'$ if and only if the maximum flow from $s$ to $t$ in $G$ exceeds $F$, otherwise the most vital arc is $g$.
\end{proof}

\noindent The above propositions show that our solution in $st$-planar cases is optimal. 
\medskip

The paper is organised as follows: in Section~\ref{se:defs} we provide some definitions and preliminary considerations, the general undirected case is dealt with in Section~\ref{se:undirected}, while Section~\ref{se:stplanar} shows how the problem can be solved in the case of directed or undirected $st$-planar graphs. Final considerations and open problems are given in Section~\ref{se:conclusions}.

\section{Definitions and preliminaries}\label{se:defs}
We are given a weighted directed graph $G=(N,A,c)$, where $N$ is a set of $n$ nodes, $A \subseteq N \times N$ is a set of $m$ arcs, and $c: A \rightarrow \RRe^+$ is a non negative function that assigns \emph{capacities} to arcs. We fix two special nodes $s$ and $t$, and we assume $G$ is connected---i.e., the underlying undirected graph is connected.
A \emph{feasible flow assignment} from $s$ to $t$ is a function $f: A \rightarrow \RRe^+$ such that:
\begin{itemize}
\item $0 \leq f(e) \leq c(e)$ for each $e \in A$ (capacity constraint),
\item $\sum_{x \in N^{-}(v)}f((x,v)) = \sum_{y \in N^{+}(v)}f((v,y))$, for each $v \in N \setminus \{s,t\}$ (conservation constraint),
\end{itemize}
where $N^{-}(v)$ (resp., $N^{+}(v)$) is the set of nodes $\{x \in N)\ | (x,v) \in A\}$ (resp., $\{x \in N\ | (v,x) \in A\}$).

The capacity constraint ensures that the flow on each arc does not exceed its capacity, while the conservation constraint ensures that for each node, other than $s$ or $t$, the flow entering the node equals the flow leaving the node.
The \emph{flow from $s$ to $t$} under a feasible flow assignment $f$ is defined as
$F(f) = \sum_{y \in N^{+}(s)}f((s,y))  -  \sum_{x \in N^{-}(s)}f((x,s))$.
The $\emph{maximum flow}$ from $s$ to $t$ is the maximum value of $F(f)$ over all feasible flow assignments $f$.
A flow assignment giving a maximum flow is called a \emph{maximum flow assignment}. Since $s$ and $t$ are usually fixed, in the sequel we do not specify ``from $s$ to $t$'', and we denote the maximum flow on $G$ from $s$ to $t$ simply by $\maxflow(G)$.

In an undirected graph $G=(N,A,c)$ (we adopt the same notation for directed and undirected graphs, as in~\cite{Ahuja}), a feasible flow assignment $f_{u}$ can be defined by considering a feasible flow assignment $f$ on the directed graph obtained by substituting each undirected arc $\edge{x}{y} \in A$ with a pair of directed arcs $(x, y)$ and $(y,x)$, both having the same capacity.
The flow value $f_{u}(x,y)$ from $x$ to $y$ on the undirected arc $\edge{x}{y}$ is defined as $f_{u}(x, y) = f((x,y)) - f((y,x))$. Note that, in undirected graphs, the flow on each arc is ``directed'', so that $f_{u}(x, y) = - f_{u}(y, x)$.

An \emph{$st$-cut} of a graph is a partition of nodes into two subsets $S,T$ such that $s \in S$ and $t \in T$. In a directed graph, a cut is also identified by the set of arcs from $S$ to $T$, i.e., the set  $A  \cap (S \times T)$. Given a cut $C = (S,T)$ and an arc $e \in A  \cap (S \times T)$, we say that $e$ \emph{crosses} $C$, and $C$ \emph{crosses} $e$, as well. The capacity of a cut is defined as $c(S,T) = \sum_{e \in (A  \cap (S \times T))} c(e)$, and a \emph{minimum $st$-cut} is an $st$-cut having minimum capacity.
Note that the capacity of cuts is oriented, so that in general $c(A,B) \not= c(B,A)$.
In the undirected case, the cut is also identified by the set of arcs having one endpoint in $S$ and the other endpoint in $T$.
Since nodes $s$ and $t$ are fixed, we denote a minimum $st$-cut in graph $G$ simply as $\mincut(G)$.
The well known Min-Cut Max-Flow theorem \cite{Fulkerson} states that $\maxflow(G) = c(\mincut(G))$, for any weighted graph $G$.
Given an arc $e=(x,y)$, by $\mincut_e(G)$ we denote a minimum capacity cut among all $st$-cuts of $G$ that cross $(x,y)$. Obviously, $\mincut_{e}(G)$ is not necessarily a minimum cut, so  $c(\mincut_{e}(G)) \geq c(\mincut(G))$, and $c(\mincut_{e}(G)) = c(\mincut(G))$ if and only if $e$ belongs to some minimum cut.

In what follows, given an arc $e \in A$, we simply denote the graph $G = (N,A\setminus\{e\},c)$ by $G - e$
and, given a node $v \in N$, we denote by $G-v$ the subgraph induced by $N \setminus \{v\}$.
The same notations are extended to sets of arcs or nodes, thus $G - A'$, with $A' \subseteq A$, is the graph $(N, A \setminus A', c)$ and $G - N'$, with $N' \subseteq N$, is the subgraph induced by $N \setminus N'$.
The distance $\dist_{G}(x,y)$ from node $x$ to node $y$ is the length of a shortest path in $G$ from $x$ to $y$. We extend the definition of distance to pairs of node sets, so that $\dist_{G}(A, B) = \min_{u \in A, v \in B}\dist_{G}(u,v)$ for any sets $A,B$ of nodes. In particular, the distance from a node $v$ to an arc $e=(x,y)$ is $\dist_{G}(v, e) = \dist_{G}(\{v\}, \{x,y\})$.

The \emph{vitality} of a resource $R$ with respect to maximum flow, according to the general concept of vitality in \cite{dagstuhl}, is defined as $\flowvit(R) = \maxflow(G) - \maxflow(G-R)$, where $R$ can be a single arc/node, or a set of arcs/nodes. 

Given a planar embedded directed graph $G$, its \emph{dual} graph $G^*$ is defined as a directed weighted multigraph, possibly having self-loops and parallel arcs, whose nodes correspond to faces of $G$ and such that for each arc $e = (x,y)$ in $G$ there is an arc $e^{*} = (f^{*},g^{*})$   in $G^{*}$, where $f^{*}$ corresponds to the face $f$ to the left of $e$ in $G$ and $g^{*}$ corresponds to the face $g$ to the right of $e$ in $G$. The length $\lungh(e^{*})$ of $e^{*}$ equals the capacity of $e$. For each arc $e$ we also include in $G^{*}$ a reverse arc of $e^{*}$, i.e., arc $(g^{*},f^{*})$, whose length is set to 0. We also say that $G$ is the \emph{primal} graph of $G^{*}$. The dual graph of a planar embedded undirected graph is a planar undirected multigraph, and is defined analogously (reverse arcs are not needed).
It is well known that $G^{*}$ is a planar graph, and that duality also maps each node $v$ in $G$ to a face $v^{*}$ in $G^{*}$, and each face $f$ in $G$ to a node $f^{*}$ in $G^{*}$.
Remember that, thanks to Euler's formula, in the case of planar graphs $m=O(n)$.

%
An $st$-planar graph is a planar graph that admits an $st$-planar embedding, i.e., a planar embedding with nodes $s$ and $t$  lying
on the same face. W.l.o.g., we assume $s$ and $t$ are on the outer face. An $st$-planar embedding of an $st$-planar graph can be found in $O(n)$ worst-case time, by computing a planar embedding as in~\cite{Hopcroft74} of the graph after adding arc $(s,t)$.
Given an $st$-planar embedded graph, in which $s$ is placed to the left and $t$ is placed to the right (see the left part of Figure~\ref{fi:dual}), we draw two semi-infinite lines from $s$ to the left and from $t$ to the right, splitting the outer face into an upper face $U$ and a lower face $L$, thus $U^{*}$ and $L^{*}$ will be two special nodes in $G^{*}$.

For any arc $e$ in a planar graph $G$, we denote by $G^{*}_{e}$ the graph obtained from $G^{*}$ by setting $\lungh(e^*)$ to zero. This is equivalent, with respect to path lengths, to contracting arc $e^{*}$ (see the right part of Figure~\ref{fi:dual}). Since arc lengths are non-negative, it follows that $\dist_{G^{*}_{e}}(x,y) \leq \dist_{G^{*}}(x,y)$, for any arc $e$ and any pair of nodes $x,y$ in $G^{*}$. 

\begin{figure}[t]
\begin{center}
\def\svgwidth{14.5cm}
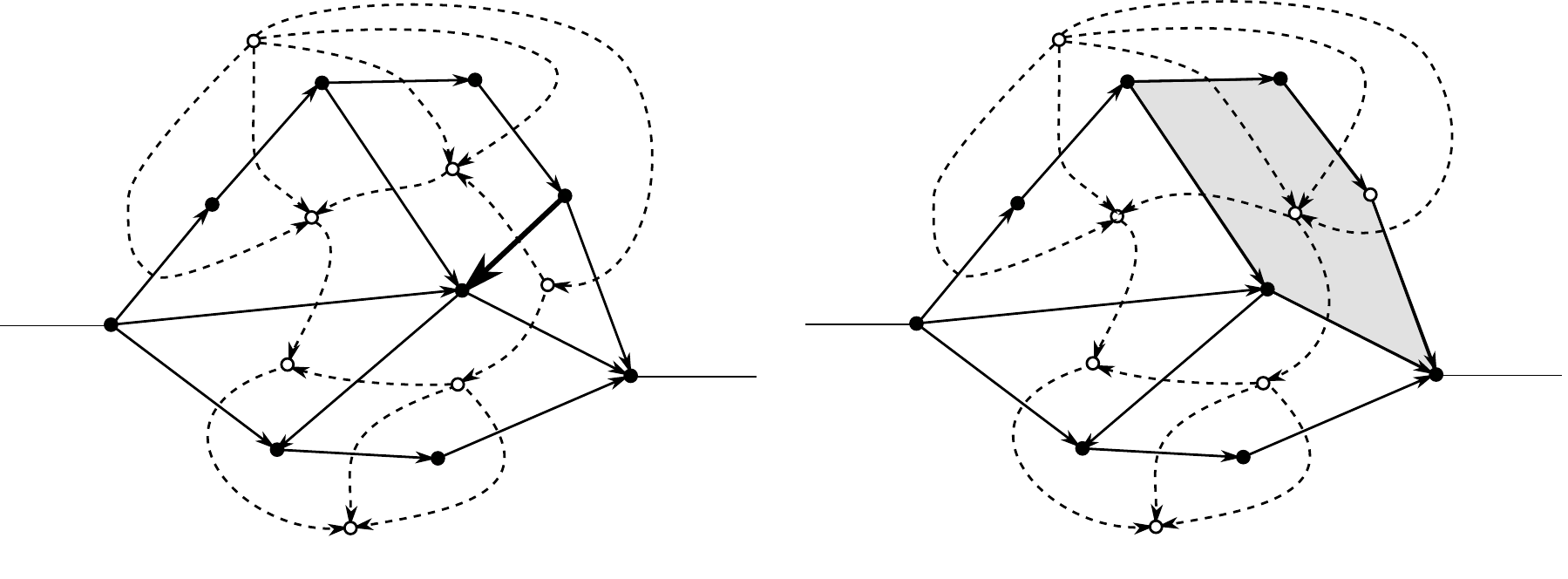
\end{center}
\caption{$st$-planar embedded graphs $G$ and $G-e$ (solid) and their dual graphs $G^{*}$ and $(G-{e})^{*}$  (dotted). Removing arc $e$, faces $f_{1}$ and $f_{2}$ in $G$ are merged, yielding face $f$ in $G-e$. The dual graph includes reverse arcs, with 0 length, for each dotted arc drawn in the picture.}\protect\label{fi:dual}
\end{figure}	

Given a planar embedded graph $G$, we say a set $S$ of arcs is \emph{contiguous} if the set of dual arcs $S^{*}$ defines a connected component in $G^{*}$. Note that the contiguity property depends on the embedding of the graph.

\bigskip
Our algorithms rely on the following result:
\begin{lemma}\label{le:general}
For  each arc $e$,
$$\flowvit(e) =
\max\left\{
0,
c(\mincut(G)) -\left(c(\mincut_e(G))  - c(e)\right)
\right\}
$$
\end{lemma}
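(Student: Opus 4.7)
The plan is to translate everything to the cut side. By Min-Cut Max-Flow, $\maxflow(G) = c(\mincut(G))$ and $\maxflow(G-e) = c(\mincut(G-e))$, so
\[
\flowvit(e) = c(\mincut(G)) - c(\mincut(G-e)),
\]
which is automatically non-negative; this non-negativity is what will produce the outer $\max\{0,\cdot\}$ in the claimed identity. The whole task then reduces to proving
\[
c(\mincut(G-e)) \;=\; \min\bigl\{c(\mincut(G)),\; c(\mincut_e(G)) - c(e)\bigr\}.
\]

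First I would observe that $G$ and $G-e$ have exactly the same $st$-cuts, viewed as vertex bipartitions; only their capacities differ, and they differ only at cuts crossed by $e$: for such a cut $(S,T)$, $c_{G-e}(S,T) = c_G(S,T) - c(e)$, while for every other cut the two capacities coincide. Splitting the minimisation defining $c(\mincut(G-e))$ according to whether $e$ crosses the cut yields
\[
c(\mincut(G-e)) \;=\; \min\Bigl\{\min_{(S,T):\, e \text{ does not cross}} c_G(S,T),\; c(\mincut_e(G)) - c(e)\Bigr\}.
\]

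It remains to replace the first inner minimum by $c(\mincut(G))$ inside the outer $\min$. This is the only subtle step and the place where I expect the main argument to live. The inner minimum is clearly at least $c(\mincut(G))$; if it is strictly greater, then every minimum cut of $G$ must cross $e$, forcing $c(\mincut_e(G)) = c(\mincut(G))$ and hence $c(\mincut_e(G)) - c(e) \leq c(\mincut(G))$, so the second term dominates the outer $\min$ and the substitution is harmless. In the complementary case equality already holds. Plugging the resulting simplification back into $\flowvit(e) = c(\mincut(G)) - c(\mincut(G-e))$ then produces exactly the $\max\{0,\; c(\mincut(G)) - c(\mincut_e(G)) + c(e)\}$ of the statement, the outer $\max$ being activated precisely when $e$ lies in no minimum cut and removing it does not drop the max-flow at all.
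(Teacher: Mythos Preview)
Your argument is correct and follows essentially the same route as the paper: both reduce to cuts via Min-Cut Max-Flow and exploit that deleting $e$ lowers the capacity of exactly the $st$-cuts crossing $e$ by $c(e)$. The paper organizes the case split around whether $c(\mincut(G-e)) < c(\mincut(G))$, whereas you first establish the clean intermediate identity $c(\mincut(G-e)) = \min\{c(\mincut(G)),\, c(\mincut_e(G)) - c(e)\}$ by partitioning cuts into those that cross $e$ and those that do not; the two presentations are equivalent.
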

\begin{proof}
By the Min-Cut Max-Flow theorem, $\flowvit(e) = c(\mincut(G)) - c(\mincut(G-e))$.
Obviously,
$c(\mincut(G-e)) \leq c(\mincut(G))$.

Let us first assume that $c(\mincut(G-e)) < c(\mincut(G))$. In this case, $\flowvit(e) = c(\mincut(G)) - c(\mincut(G-e)) > 0$, and $\mincut(G-e) \cup \{e\}$ is a minimum capacity cut among all cuts crossing arc $e$.
In fact, by contradiction, if a cut $C'$ crossing $e$ exists with capacity smaller than $c(\mincut(G-e)) + c(e)$, then $C' \setminus \{e\}$ would be a cut in $G - e$ with capacity smaller that $c(\mincut(G-e))$. 

Hence,
$\mincut_{e}(G) \setminus \{e\}$ is a minimum cut in $G-e$, and its capacity is $c(\mincut_{e}(G)) - c(e)$.
Otherwise, let  $c(\mincut(G-e)) = c(\mincut(G))$. In this case, by definition, $\flowvit(e) = 0$ and $c(\mincut_{e}(G)) - c(e) \geq c(\mincut(G))$, thus giving the thesis.
\end{proof}

\section{General undirected graphs}\label{se:undirected}
By definition, the vitality of each arc $e$ can be computed by solving a max-flow problem on $G-e$, therefore, we can compute the vitality of all arcs by $m$ calls to a max-flow routine.

In the following, we show that, in the case of general undirected graphs, it is possible to compute the vitality of all arcs by only $2(n-1)$ calls to a max-flow routine.
Lemma~\ref{le:general} shows that, in order to compute $\flowvit(e)$ for any given arc $e=\edge{x}{y}$ in a general undirected graph, it is sufficient to compute $c(\mincut_{e}(G))$. Let $(C, \overline{C})$ be any $st$-cut that crosses $e$, then either:
\begin{enumerate}[(1)]
\item\label{item:cut1} $\{x,s\} \subseteq C$ and $\{y,t\} \subseteq \overline{C}$ or
\item\label{item:cut2} $\{y,s\} \subseteq C$ and $\{x,t\} \subseteq \overline{C}$.
\end{enumerate}

Therefore, we can find the minimum capacity cut by comparing the best cut of type (\ref{item:cut1}) and the best cut of type (\ref{item:cut2}). A minimum capacity cut of type (\ref{item:cut1}) can be found by applying a standard min-cut algorithm to a graph $G'$ obtained from $G$ by adding two arcs $\edge{x}{s}$ and $\edge{y}{t}$ with very high capacities (e.g., greater than the sum of all the capacities in the graph). Obviously, a minimum $st$-cut in $G'$  cannot separate $x$ from $s$, nor can it separate $y$ from $t$, thus it necessarily crosses arc $e$.  Analogously, minimum capacity cuts of type  (\ref{item:cut2}) are found by adding high capacity arcs $(x,t)$ and $(y,s)$.

Computing cuts of type (\ref{item:cut1}) and of type (\ref{item:cut2}) for each arc in $G$ can be done by solving $2m$ minimum cut problems, but the number of minimum cuts to be computed is actually much smaller.
Gomory and Hu~\cite{Gomory} showed that in any undirected graph a set $\mathcal{C}$ of at most $n-1$ cuts exists so that for each pair of nodes $x,y$ a minimum $xy$-cut can be found in $\mathcal{C}$. This means that the $\binom{n}{2}$ pairs of nodes in $G$  can be  separated by using only $n-1$ different minimum cuts. Moreover, these cuts can be implicitly represented by a \emph{cut tree}:
\begin{definition}[\cite{Gomory}]
A \emph{cut tree} $T=(N,A_{T},w)$ of a weighted undirected graph $G=(N,A,c)$ is a  tree with real weighted arcs that represents minimum capacity cuts for all pairs of nodes in $N$. More precisely, for any two nodes $x,y \in N$, let $e$ be a minimum weight arc in the unique path joining $x$ and $y$ in $T$: then $T$ is a cut tree of $G$ if and only if
\begin{enumerate}[(i)]
\item\label{cutweight} $w(e)$ equals the capacity of a minimum $xy$-cut in $G$, and
\item\label{cut} $e$ splits $T$ into two connected components with node sets $X$ and $Y$ so that $(X,Y)$ is a minimum capacity $xy$-cut in $G$.
\end{enumerate}
\end{definition}

A first algorithm for computing a cut tree has been proposed in~\cite{Gomory}, and a simpler approach is shown in~\cite{Gusfield}.

A \emph{flow tree} differs from a cut tree in the fact that property~(\ref{cut}) is not required. Hence, a flow tree only represents the \emph{capacities} of the minimum cuts, not the cuts themselves, for all pairs of nodes. A very simple algorithm for computing a flow tree is given in~\cite{Gusfield}. All the algorithms in~\cite{Gomory,Gusfield} require $n-1$ maximum flow computations.

The concept of flow tree has been generalized by Cheng and Hu in~\cite{ancestor}. In their more general setting, an arbitrary real function $f$ is defined on the set of cuts (i.e., node bipartitions) and, given any two nodes $x,y$, an $xy$-cut that minimizes $f$ has to be computed.
\begin{definition}[\cite{ancestor}]
Given an undirected graph $G=(N,A)$ and a real function $f$ defined on the set of all bipartitions of $N$, an \emph{ancestor tree} $T_{f}$ is a binary tree with leaves $N$ such that each internal node $v$ of $T_{f}$ represents a minimum cut (w.r.t.\ $f$) separating each leaf in the left subtree of $v$ from each leaf in the right subtree of $v$.
\end{definition}
 
The minimum (w.r.t.\ $f$) cut separating two nodes $x$ and $y$ in $G$ can be found by looking at the lowest common ancestor of $x$ and $y$ in $T_{f}$.
For example, we can define $f$ so that balanced bipartitions are preferred, or impose any other arbitrary constraint and/or cost function to cuts. As a special case, $f$ can be the sum of the capacities of arcs crossing the cut, as in the classical max-flow problem.
Cheng and Hu showed that for any undirected graph $G$ and any cost function $f$ it is always possible to compute an ancestor tree $T_{f}$.
Assuming that, given a pair $x,y$, a routine is available for computing a minimum $xy$-cut in $G$ according to the cost function $f$, building an ancestor tree requires $n-1$ calls to that routine---plus overall $O(n^{2})$ worst-case time for restructuring operations. 

Note that, while an ancestor tree exists for any cost function $f$, it is not always possible to define a cut tree according to $f$. For example, let us define $f$ as in the classical min cut problem, with the exception that partitions in which one side contains only one node have cost $+\infty$: a cut tree should have at least one leaf $v$, and the cut defined by the arc incident on $v$ in the cut tree would define a partition in which one side contains only node $v$.

The structure of the solution space of generalised cut problems is studied in~\cite{Hartvigsen01,Hassin88,Hassin90}.

\vspace{5mm}
We are now ready to describe our algorithm for computing arc vitalities for general undirected graphs.
We first compute $c(\mincut(G))$, and then we build an ancestor tree $T_{st}$ according to cost function $f_{st}$ defined as follows:
$$f_{st}(C,\overline{C}) =
\left\{
\begin{array}{ll}
+\infty & \mbox{if}\ \{s,t\} \subseteq C\ \mbox{or}\  \{s,t\} \subseteq \overline{C}\\[2mm]
c(C,\overline{C}) & \mbox{otherwise} 
\end{array}
\right.
$$
For each arc $e=\edge{x}{y}$ in $G$, we find on $T_{st}$  the capacity of a minimum $xy$-cut that also separates $s$ 
from $t$.
Minimizing the cost function $f_{st}$  gives $\mincut_{e}(G)$, for each $e$, and, by Lemma~\ref{le:general}, allows us to compute $\flowvit(e)$ in constant time.

\begin{theorem}
Given an undirected weighted graph $G=(N,A,c)$ and two nodes $s,t$, we can compute $\flowvit(e)$, for all $e \in A$, in $O(n \cdot \mbox{MF}(n,m) + m \cdot n)$ worst-case time, where $\mbox{MF}(n,m)$ is the time needed to compute a maximum flow.
\end{theorem}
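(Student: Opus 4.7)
The plan is to realize the algorithm sketched just before the theorem statement: compute $c(\mincut(G))$ once, then build the ancestor tree $T_{st}$ for the cost function $f_{st}$, and finally, for every arc $e=\edge{x}{y}\in A$, extract $c(\mincut_{e}(G))$ by a lowest common ancestor (LCA) query in $T_{st}$ and combine it with $c(\mincut(G))$ and $c(e)$ via Lemma~\ref{le:general}. Correctness and complexity therefore reduce to two sub-tasks: (i) implementing the oracle that Cheng–Hu require, namely a routine that returns a minimum $xy$-cut under $f_{st}$ for a given pair $x,y$; and (ii) verifying that the value read at the LCA of the endpoints of $e$ in $T_{st}$ is indeed $c(\mincut_{e}(G))$.

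For (i), note that a bipartition with finite $f_{st}$ cost is exactly an $st$-cut that also separates $x$ from $y$, which comes in only the two configurations (\ref{item:cut1}) and (\ref{item:cut2}) discussed at the beginning of Section~\ref{se:undirected}. Each configuration is handled by augmenting $G$ with two arcs whose capacity exceeds $\sum_{a\in A}c(a)$, forcing the prescribed endpoints to lie on the same side of any minimum $st$-cut, and then running a standard max-flow/min-cut algorithm on the augmented graph. Taking the smaller of the two resulting cut capacities yields the $f_{st}$-minimum $xy$-cut using $2\cdot\mbox{MF}(n,m)$ time per oracle call. Plugging this oracle into the Cheng–Hu construction produces $T_{st}$ with $n-1$ oracle calls and $O(n^{2})$ restructuring overhead, for a total of $O(n\cdot\mbox{MF}(n,m)+n^{2})$.

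For (ii), fix $e=\edge{x}{y}$. By definition of the ancestor tree, the LCA of $x$ and $y$ stores the capacity of an $f_{st}$-minimum cut separating $x$ from $y$. Since $e$ is the undirected arc $\edge{x}{y}$, every bipartition separating $x$ from $y$ crosses $e$; and finite $f_{st}$-cost further forces the cut to separate $s$ from $t$. Hence the cuts minimised by the LCA lookup are exactly the $st$-cuts crossing $e$, so the stored value equals $c(\mincut_{e}(G))$, and Lemma~\ref{le:general} then returns $\flowvit(e)$ in constant time.

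Summing the costs: one max-flow call for $c(\mincut(G))$; $O(n\cdot\mbox{MF}(n,m)+n^{2})$ to build $T_{st}$; and $m$ LCA queries on a tree of size $O(n)$, each executable in $O(n)$ time by walking up the tree. The totals combine to $O(n\cdot\mbox{MF}(n,m)+m\cdot n)$, matching the claimed bound. The step I expect to be most delicate is (ii), because Cheng–Hu is formulated for an arbitrary cost function and one must be sure that $f_{st}$, with its $+\infty$ values on $st$-non-separating bipartitions, still falls inside the scope of~\cite{ancestor}; this is indeed the case, and the $+\infty$ values simply guarantee that every internal node of $T_{st}$ corresponds to a bipartition that separates $s$ from $t$, which is exactly the restriction we need to reach $\mincut_{e}(G)$.
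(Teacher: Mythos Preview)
Your proposal is correct and follows essentially the same approach as the paper: build the Cheng--Hu ancestor tree $T_{st}$ using an oracle that solves two augmented max-flow instances per call, then read off $c(\mincut_{e}(G))$ via an LCA query for each arc and apply Lemma~\ref{le:general}. Your write-up is in fact slightly more explicit than the paper's in justifying why the LCA value coincides with $c(\mincut_{e}(G))$ and why the $+\infty$ values pose no obstacle to the Cheng--Hu construction.
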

\begin{proof}
Building the ancestor tree $T_{st}$ requires $n-1$ calls to a routine that, given two nodes $x,y$, computes a minimum cut that separates both $x$ from $y$ and $s$ from $t$. As described in the beginning of this section, such a cut can be found by solving two standard max-flow instances, namely, on a graph $G'$ obtained from $G$ by adding two arcs $\edge{x}{s}$ and $\edge{y}{t}$ with very high capacities
and on a graph $G''$ obtained from $G$ by adding two arcs
$\edge{x}{t}$ and $\edge{y}{s}$ with very high capacities.

For each arc $e=\edge{x}{y}$, the value $\flowvit(e)$ can be computed by Lemma~\ref{le:general}, where $c(\mincut_{e}(G))$ is found on $T_{st}$ by searching for the lowest common ancestor of $x,y$. This trivially\footnote{Lowest common ancestors could be found more efficiently, but in our case this is not the dominant asymptotic cost} requires $O(n)$ for each arc, leading to an overall $O(m \cdot n)$ additional worst-case time.
\end{proof}

By applying the currently fastest algorithms for max-flow in general graphs, i.e., King, Rao and Tarjan's algorithm~\cite{KingRaoTarjan} for $m = \Omega(n^{1+\varepsilon})$ with $\varepsilon > 0$, and Orlin's algorithm~\cite{Orlin} for sparse graphs, both requiring worst-case time $O(m \cdot n)$, we can state the following result.
\begin{corollary}
For any undirected graph $G=(N,A,c)$, we can compute $\flowvit(e)$, for all $e \in A$, in $O(n^{2} \cdot m)$ worst-case time.
\end{corollary}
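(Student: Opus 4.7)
The plan is straightforward: this is a direct substitution into the bound established by the preceding theorem, so the work is essentially bookkeeping rather than new argument. First I would invoke the theorem, which gives worst-case running time $O(n \cdot \mbox{MF}(n,m) + m \cdot n)$ for computing $\flowvit(e)$ for all $e \in A$, where $\mbox{MF}(n,m)$ is the cost of a single max-flow computation on an undirected graph with $n$ nodes and $m$ arcs. The claim to establish is that this simplifies to $O(n^{2} \cdot m)$ once a concrete max-flow subroutine is plugged in.

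Next I would split on the density of $G$ and appeal to the two state-of-the-art max-flow algorithms cited in the paragraph preceding the corollary. For graphs with $m = \Omega(n^{1+\varepsilon})$ for some $\varepsilon > 0$, the King–Rao–Tarjan algorithm gives $\mbox{MF}(n,m) = O(m \cdot n)$, and for the remaining (sparse) case Orlin's algorithm provides the same $O(m \cdot n)$ bound. In either case we have $\mbox{MF}(n,m) = O(m \cdot n)$, so substituting into the theorem yields
\[
O\!\left(n \cdot m \cdot n + m \cdot n\right) \;=\; O(n^{2} \cdot m),
\]
where the second term is absorbed by the first. This gives the claimed bound.

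There is no real obstacle here; the only thing to be careful about is that the two max-flow algorithms together cover the entire range of densities (so the case split is exhaustive) and that the additive $O(m \cdot n)$ term, coming from the lowest-common-ancestor lookups for each arc, is dominated by $n \cdot \mbox{MF}(n,m)$ under the chosen subroutines. Both checks are immediate, so the corollary follows.
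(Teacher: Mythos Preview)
Your proposal is correct and follows exactly the paper's approach: the corollary is obtained by plugging the $O(m\cdot n)$ max-flow bound from King--Rao--Tarjan (dense case) and Orlin (sparse case) into the preceding theorem's $O(n\cdot \mbox{MF}(n,m)+m\cdot n)$ bound, yielding $O(n^{2}m)$ with the additive $O(mn)$ term absorbed. There is nothing to add.
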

 
In order to appreciate the efficiency of the above algorithm for computing the vitality of all arcs, observe that almost all arcs in a graph might have a non-trivial vitality, i.e., there can be $\Omega(n^{2})$ arcs such that $0 < \flowvit(e) < c(e)$. This is shown by graph $B$ in Figure~\ref{fi:example}, with $n=2k + 2$ nodes $\{s, x_{1}, x_{2}, \ldots, x_{k}, y_{1}, y_{2}, \ldots, y_{k}, t\}$. Nodes $x_{i}$'s and $y_{i}$'s are the two sides of a complete bipartite graph, and there are $k$ arcs joining $s$ to each $x_{i}$ and $k$ arcs joining $t$ to each $y_{i}$. Each arc $(x_{i}, y_{j})$, for $1 \leq i,j \leq k$, has a distinct capacity $c((x_{i}, y_{j})) = 1 + \varepsilon_{i,j}$,  with $0 < \varepsilon_{i,j} < \frac{1}{k}$, while all arcs $(s, x_{i})$ and $(t, y_{i})$, for $1 \leq i \leq k$,  have capacity $k$. We show here that $0 < \flowvit((x_{i}, y_{j})) < 1$ for all $1 \leq i,j \leq k$.

A maximum flow assignment in $B$ is given by $f(s, x_{i})=k$, for $1 \leq i \leq k$, and $f(y_{j}, t)=k$, for $1 \leq j \leq k$, and $f(x_{i}, y_{j})=1$, for $1 \leq i,j \leq k$. Thus:
$$\maxflow(B) = k^2$$

Let us evaluate $\flowvit((x_{i}, y_{j}))$. Assuming w.l.o.g.\ that
$$\sum_{\substack{1\leq q \leq k\\ \ q \not=j}} (\varepsilon_{i,q})
\leq \sum_{\substack{1\leq p \leq k\\ \ p \not=i}} (\varepsilon_{p,j})$$
 a minimum $st$-cut in $B - (x_{i},y_{j})$ is given by bipartition $\left( \{s, x_{i}\}, V(B) \setminus  \{s, x_{i}\} \right)$, and its crossing arcs are $(s, x_{p})$, for $p \not= i$, together with arcs $(x_{i},y_{q})$, for $q \not= j$, giving 
$$\maxflow(B - (x_{i}, y_{j}))
= k(k-1) + \sum_{\substack{1\leq q \leq k\\ \ q \not=j}} (1 + \varepsilon_{i,q})
= k^{2} - 1 + \sum_{\substack{1\leq q \leq k\\ \ q \not=j}} \varepsilon_{i,q}
$$

Hence, since $0 < \varepsilon_{i,q} < \frac{1}{k}$,
$$\flowvit((x_{i}, y_{j})) = 1 - \sum_{\substack{1\leq q \leq k\\ \ q \not=j}} \varepsilon_{i,q} \in (0,1)$$

Values $\varepsilon_{i,j}$ can easily be chosen so that all arcs have distinct vitalities.

\begin{figure}[t]
\begin{center}
\def\svgwidth{10.0cm}
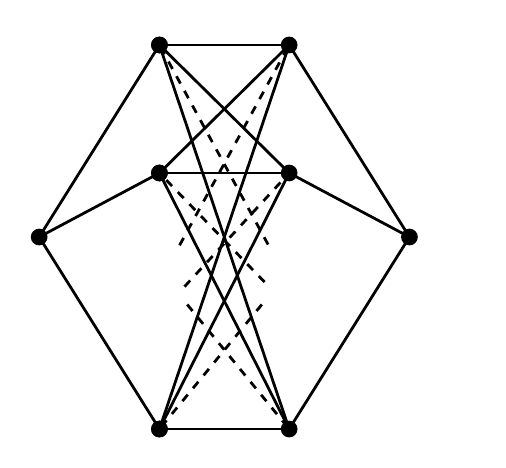
\end{center}
\caption{Graph $B$. All the $k^{2}$ arcs in the complete bipartite graph have distinct non-trivial vitality.}\protect\label{fi:example}
\end{figure}	

\section{$st$-planar graphs}\label{se:stplanar}
Algorithms for finding the maximum flow in a planar graph vary according to whether the input graph is directed or undirected, and whether it is $st$-planar, i.e., it can be drawn on a plane with $s$ and $t$ on the same face.

In the case of  undirected planar (but not necessarily $st$-planar) graphs,
Reif~\cite{Reif} proposed a divide and conquer approach for computing a minimum $st$-cut in $O(n\log^2n)$ worst-case time. By plugging in the SSSP tree algorithm for planar graphs by Henzinger et al.~\cite{Henzinger}, this bound can be improved to $O(n\log n)$. The best currently known approach for computing a minimum $st$-cut is due to Italiano et al.~\cite{Italiano}, and it achieves $O(n\log \log n)$ time by a two phase approach, that exploits the algorithm by Hassin and Johnson~\cite{Hassin85}.
For directed planar graphs, Borradaile and Klein~\cite{Borradaile} presented an $O(n\log n)$ time algorithm based on a repeated search of left-most circulations.

The first algorithm proposed for directed  $st$-planar graphs is due to Ford and Fulkerson~\cite{Fulkerson} and consists in repeatedly saturating the uppermost path of a planar embedding of the graph, each time deleting saturated arcs. 
Itai and Shiloach~\cite{Itai} proposed an $O(n\log n)$ time implementation of this procedure, by using a priority queue for finding the saturating arc of each uppermost path.
Later, Hassin~\cite{Hassin} proved that, if $G$ is $st$-planar and $G^*$ is the dual of an $st$-planar embedding of $G$,
a minimum $st$-cut in $G$ corresponds to a shortest path in $G^*$ and the maximum flow can be computed in linear time starting from a single source shortest path (SSSP) tree. In fact, Hassin~\cite{Hassin}  showed that the SSSP tree also defines a maximum flow assignment. It suffices to assign to each arc $e$ a flow value equal to the difference between the distances from the lowermost face to the two endpoints of $e^{*}$.
Using the algorithm by Henzinger et al.~\cite{Henzinger} for the SSSP tree problem in planar graphs, the minimum $st$-cut in $G$ and the corresponding maximum flow can be found in $O(n)$ time.  

We describe in Section~\ref{sse:starcs} how the ideas in~\cite{Itai} can be applied to compute the vitality of all arcs in an $st$-planar graph.
The same approach
 is then generalised in Section~\ref{use:stnodes} to compute the vitality of nodes and of more general contiguous sets of arcs.
 
\subsection{Vitality of arcs}\label{sse:starcs}

Fixing an $st$-planar embedding of $G$, we can exploit the strong correspondence for $st$-planar graphs between flows in $G$ and distances from the upper node  $U^{*}$ to the lower node $L^{*}$ in $G^{*}$. The definition of vitality  gives, for  each arc $e$:
\begin{equation*}
\flowvit(e) = \dist_{G^{*}}(U^{*}, L^{*}) - \dist_{G_{e}^{*}}(U^{*}, L^{*})
\end{equation*}
Obviously, $\dist_{G_{e}^{*}}(U^{*}, L^{*}) \leq \dist_{G^{*}}(U^{*}, L^{*})$. 
An analog version of Lemma~\ref{le:general}  allows us to state the same equality using only distances in $G^{*}$.
\begin{lemma}\label{le:faces}
Given an $st$-planar embedded directed or undirected graph $G$ and an arc $e \in G$, we have:
$$\flowvit(e) = \max
            \left\{
                 \begin{array}{l}
                 0, \\
                 \dist_{G^{*}}(U^{*},L^{*}) - 
                  \left(
                                                   \dist_{G^{*}}(U^{*},e^{*})
                                          +
                                                   \dist_{G^{*}}(e^{*},L^{*})
                     \right)
                    \end{array}
              \right\}
$$
\end{lemma}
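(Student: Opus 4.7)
The plan is to combine Lemma~\ref{le:general} with the primal/dual correspondence recalled from Hassin~\cite{Hassin}: in an $st$-planar embedded graph, $st$-cuts in $G$ correspond bijectively to $U^*$--$L^*$ paths in $G^*$, with the capacity of a cut equal to the length of the corresponding path. In particular $c(\mincut(G)) = \dist_{G^*}(U^*,L^*)$, so Lemma~\ref{le:general} reduces the statement to expressing $c(\mincut_e(G)) - c(e)$ as $\dist_{G^*}(U^*,e^*) + \dist_{G^*}(e^*,L^*)$.

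First I would observe that, under the same correspondence, $c(\mincut_e(G))$ equals the length of a shortest $U^*$--$L^*$ path in $G^*$ that traverses the dual arc $e^*$. Splitting such a path at $e^*$, and writing $f_1^*, f_2^*$ for the endpoints of $e^*$, I get
$$
c(\mincut_e(G)) \;=\; c(e) \;+\; \min\bigl\{\dist_{G^*}(U^*,f_1^*)+\dist_{G^*}(f_2^*,L^*),\; \dist_{G^*}(U^*,f_2^*)+\dist_{G^*}(f_1^*,L^*)\bigr\},
$$
since $w(e^*)=c(e)$. In the directed case, taking $e^*=(f_1^*,f_2^*)$, the $0$-length reverse arc $(f_2^*,f_1^*)$ immediately gives $\dist_{G^*}(U^*,f_1^*) \le \dist_{G^*}(U^*,f_2^*)$ and $\dist_{G^*}(f_2^*,L^*) \le \dist_{G^*}(f_1^*,L^*)$, so the minimum above is attained by $\dist_{G^*}(U^*,f_1^*)+\dist_{G^*}(f_2^*,L^*)$ and this equals $\dist_{G^*}(U^*,e^*)+\dist_{G^*}(e^*,L^*)$ by the definition of distance to an arc. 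Substituting into Lemma~\ref{le:general} yields the claim.

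The subtle point, which I expect to be the main obstacle, is the undirected case: there is no $0$-length reverse, both orientations of $e^*$ are admissible in a shortest path, and if we set $a_i=\dist_{G^*}(U^*,f_i^*)$ and $b_i=\dist_{G^*}(f_i^*,L^*)$, one only has the inequality
$$
\min\{a_1+b_2,\,a_2+b_1\} \;\ge\; \min\{a_1,a_2\} + \min\{b_1,b_2\},
$$
possibly strict. I would handle this by a short case analysis: if equality holds, the argument above still goes through. If the inequality is strict, an elementary check shows this forces $a_1 \ne a_2$ \emph{and} $b_1 \ne b_2$; WLOG $a_1<a_2$ and $b_1<b_2$, and then $\dist_{G^*}(U^*,L^*) \le a_1+b_1 = \dist_{G^*}(U^*,e^*)+\dist_{G^*}(e^*,L^*)$ by concatenating the two shortest subpaths through $f_1^*$. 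Hence the inner bracket $\dist_{G^*}(U^*,L^*)-\dist_{G^*}(U^*,e^*)-\dist_{G^*}(e^*,L^*)$ is non-positive, and a fortiori so is $\dist_{G^*}(U^*,L^*)-(c(\mincut_e(G))-c(e))$; the outer $\max\{0,\cdot\}$ clips both expressions to $0$ and the equality in the lemma still holds. This covers all cases and completes the proof.
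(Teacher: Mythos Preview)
Your argument is correct and close in spirit to the paper's, but you enter through a slightly different door. The paper does \emph{not} invoke Lemma~\ref{le:general}; instead it works directly with $G^*_e$ (the dual with $w(e^*)$ set to $0$), writes $\flowvit(e)=\dist_{G^*}(U^*,L^*)-\dist_{G^*_e}(U^*,L^*)$, and then argues that a shortest $U^*$--$L^*$ path in $G^*_e$ either avoids $e^*$ (giving $\flowvit(e)=0$) or passes through it (giving the displayed difference). You instead translate $c(\mincut_e(G))$ into ``shortest $U^*$--$L^*$ path in $G^*$ constrained to use $e^*$'' and then strip off $c(e)=w(e^*)$. The two viewpoints are equivalent---contracting $e^*$ versus forcing the path through $e^*$---and lead to the same case split. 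What your route buys is that the $\max\{0,\cdot\}$ structure is inherited for free from Lemma~\ref{le:general}; what the paper's route buys is that it never needs the cut/path correspondence for \emph{constrained} cuts, only for ordinary minimum cuts in $G$ and in $G-e$.

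One further remark: your treatment of the undirected case is actually more explicit than the paper's. The paper states the three-term $\min$ for $\dist_{G^*_e}(U^*,L^*)$ and then simply concludes ``Hence, also for undirected graphs, either $\flowvit(e)=0$, or \ldots'', leaving to the reader exactly the observation you spell out---namely that if the same $f_i^*$ minimises both $\dist_{G^*}(U^*,\cdot)$ and $\dist_{G^*}(\cdot,L^*)$, then concatenating through $f_i^*$ already gives $\dist_{G^*}(U^*,L^*)\le\dist_{G^*}(U^*,e^*)+\dist_{G^*}(e^*,L^*)$ and the $\max$ clips to $0$.
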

\begin{proof}
Let us first consider  directed graphs.
Graph $G^{*}_{e}$ derives from $G^{*}$ after setting $\lungh(e^{*}) = \lungh((f_{1}^{*},f_{2}^{*}))=0$, where $f_{1}$ and $f_{2}$ are the faces in $G$ respectively to the left and to the right of $e$. This also corresponds, in the primal graph, to merging faces $f_{1}$ and $f_{2}$ into a single face $f$ (see Figure~\ref{fi:dual}).
Obviously, $\dist_{G_{e}^{*}}(U^{*}, L^{*}) \leq \dist_{G^{*}}(U^{*}, L^{*})$. 

A shortest path $\pi$ from $U^{*}$ to $L^{*}$ in $G^{*}_{e}$ is either a path that does not contain $e^{*}$, or is the concatenation of a shortest path $\pi_{1}$ from $U^{*}$ to $f_{1}^{*}$, arc $e^{*}$, and  a shortest path $\pi_{2}$ from $f_{2}^{*}$ to $L^{*}$. If $\pi$ does not contain $e^{*}$, then $\pi$  is also a shortest path from $U^{*}$ to $L^{*}$ in $G^{*}$. If $\pi$ contains $e^{*}$, then $\pi_{1}$ and $\pi_{2}$ do not intersect each other: in fact, if $\pi_{1}$ and $\pi_{2}$ intersect, a path shorter than $\pi$ exists, as shown in Figure~\ref{fi:pathIntersecanti}.
\begin{figure}[t]
\begin{center}
\def\svgwidth{8cm}
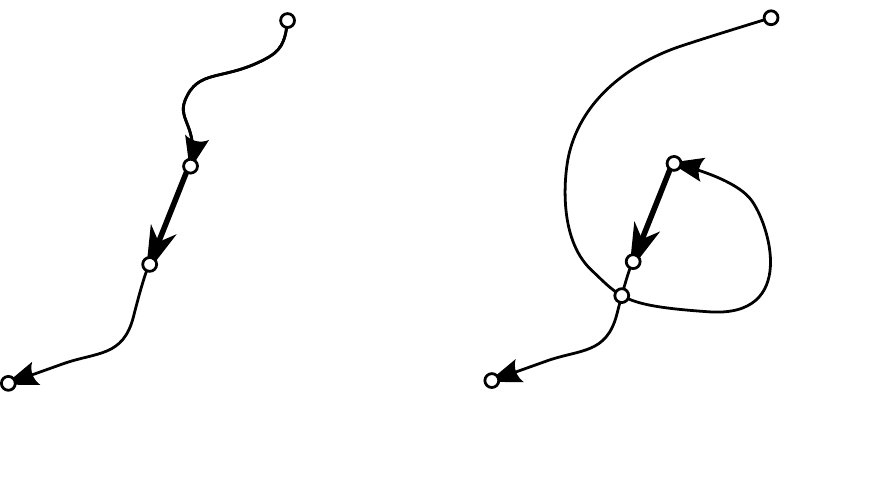
\end{center}
\caption{(a) A shortest path from $U^{*}$ to $L^{*}$ passing through $e^{*}$ is the concatenation of $\pi_{1}$, $e^{*}$, and $\pi_{2}$.
(b) If $\pi_{1}$ intersects $\pi_{2}$ in $z^{*}$, then a shorter path from $U^{*}$ to $L^{*}$ through $z^{*}$  exists.}\protect\label{fi:pathIntersecanti}
\end{figure}	
Thus,
\begin{equation}\label{eq:varicasidirected}
\dist_{G^{*}_{e}}(U^{*},L^{*})
=
\min\left\{
	\begin{array}{l}
	\dist_{G^{*}}(U^{*},L^{*})\\

	\dist_{G^{*}}(U^{*},f_{1}^{*})  +  \dist_{G^{*}}(f_{2}^{*},L^{*})
	\end{array}
\right\}
\end{equation}

In the first case $\flowvit(e) = 0$, while in the second case
$\flowvit(e) = \dist_{G^{*}}(U^{*},L^{*}) - 
                  \left(
                                                   \dist_{G^{*}}(U^{*},e^{*})
                                          +
                                                   \dist_{G^{*}}(e^{*},L^{*})
                     \right)
$.

In the undirected case, a shortest path from $U^{*}$ to $L^{*}$ containing $e^{*}$ could visit either $f_{1}^{*}$ or $f_{2}^{*}$ first. Thus, instead of equality~(\ref{eq:varicasidirected}), we have
\begin{equation*}
\dist_{G^{*}_{e}}(U^{*},L^{*})
=
\min\left\{
	\begin{array}{l}
	\dist_{G^{*}}(U^{*},L^{*})\\

	\dist_{G^{*}}(U^{*},f_{1}^{*})  +  \dist_{G^{*}}(f_{2}^{*},L^{*})\\
	\dist_{G^{*}}(U^{*},f_{2}^{*})  +  \dist_{G^{*}}(f_{1}^{*},L^{*})
	\end{array}
\right\}
\end{equation*}
Hence, also for undirected graphs, either $\flowvit(e) = 0$, or
$\flowvit(e) = \dist_{G^{*}}(U^{*},L^{*}) - 
                  \left(
                                                   \dist_{G^{*}}(U^{*},e^{*})
                                          +
                                                   \dist_{G^{*}}(e^{*},L^{*})
                     \right)
$.

\end{proof}

\begin{theorem}
Given a directed or undirected $st$-planar weighted graph $(N, A, c)$, we can compute $\flowvit(e)$, for all $e \in A$, in $O(n)$ worst-case time.
\end{theorem}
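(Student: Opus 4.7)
The plan is to reduce the entire computation to two single-source shortest path (SSSP) calls in the planar dual graph, then recover each arc's vitality in constant time via Lemma~\ref{le:faces}. First, using the algorithm of Hopcroft and Tarjan cited in the paper, I would compute an $st$-planar embedding of $G$ in $O(n)$ worst-case time, placing $s$ and $t$ on the outer face and splitting the outer face into $U$ and $L$ by the two semi-infinite rays from $s$ and $t$. From the embedding I build the dual $G^{*}$ explicitly (including the reverse zero-length arcs in the directed case) in $O(n)$ time, which is possible because by Euler's formula $|A|=O(n)$ and therefore $G^{*}$ has $O(n)$ nodes and arcs as well.

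Next, I would invoke the linear-time planar SSSP algorithm of Henzinger et al.~\cite{Henzinger} twice on $G^{*}$: once with source $U^{*}$ to obtain $\dist_{G^{*}}(U^{*},v)$ for every face-node $v$, and once on the reverse of $G^{*}$ with source $L^{*}$ to obtain $\dist_{G^{*}}(v,L^{*})$ for every $v$. Both calls take $O(n)$ time, and together they also yield $\dist_{G^{*}}(U^{*},L^{*})$ as a by-product. This precomputation gives, for each face $f$ of $G$, the two distances $\dist_{G^{*}}(U^{*},f^{*})$ and $\dist_{G^{*}}(f^{*},L^{*})$.

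Finally, I would scan the arcs of $G$. For every arc $e$, the embedding directly identifies the two faces $f_{1},f_{2}$ on either side of $e$, so $e^{*}$ has the node set $\{f_{1}^{*},f_{2}^{*}\}$ and, by the definition of distance from a node to an arc given in Section~\ref{se:defs},
\[
\dist_{G^{*}}(U^{*},e^{*})=\min\{\dist_{G^{*}}(U^{*},f_{1}^{*}),\dist_{G^{*}}(U^{*},f_{2}^{*})\},
\]
and analogously for $\dist_{G^{*}}(e^{*},L^{*})$. Plugging these two values together with $\dist_{G^{*}}(U^{*},L^{*})$ into the formula of Lemma~\ref{le:faces} yields $\flowvit(e)$ in $O(1)$ time per arc, so the total cost is $O(n)$.

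I do not expect a genuinely hard step: the statement is essentially a packaging of Lemma~\ref{le:faces} together with the fact that SSSP in planar graphs is available in linear time. The only point that needs a brief justification is the reduction of $\dist_{G^{*}}(U^{*},e^{*})$ and $\dist_{G^{*}}(e^{*},L^{*})$ to minima over the two endpoints of $e^{*}$, which follows immediately from the convention adopted in Section~\ref{se:defs}; and, in the directed case, verifying that the zero-length reverse arcs introduced in $G^{*}$ keep all lengths nonnegative so that Henzinger et al.'s algorithm applies without modification.
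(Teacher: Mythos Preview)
Your proposal is correct and follows essentially the same approach as the paper: compute two SSSP trees in $G^{*}$ (from $U^{*}$ and, on the reversed graph, from $L^{*}$) using Henzinger et al.'s linear-time planar algorithm, then evaluate the formula of Lemma~\ref{le:faces} in $O(1)$ per arc. You spell out a few details the paper leaves implicit (building the embedding first, the min over the two endpoints of $e^{*}$, and the nonnegativity check for Henzinger et al.), but the argument is the same.
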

\begin{proof}
Lemma~\ref{le:faces} only uses graph $G^{*}$, thus it is possible to compute the vitality of arc $e$ without explicitly computing $G^{*}_{e}$, for each arc $e$.
It suffices to store, for each node $f^{*}$ in $G^{*}$, the pair of distances $ \dist_{G^{*}}(U^{*},f^{*})$ and 
$\dist_{G^{*}}(f^{*},L^{*})$. These can be computed by means of two single-source shortest path trees,
the first in $G^{*}$ from $U^{*}$ to each other node and the second in the reversal of $G^{*}$ from $L^{*}$ to each other node. Shortest path trees can be found in planar graphs in $O(n)$ worst-case time using the technique in~\cite{Henzinger}.
\end{proof}

\subsection{Vitality of nodes and contiguous arc sets}\label{use:stnodes}
Let $F$ be a set of contiguous arcs in an $st$-planar embedding of a directed or undirected $st$-planar graph $G$. We recall that the set of the dual arcs $F^{*}$ defines  a connected subgraph in $G^{*}$. In particular, in case $F$ is the set of all arcs incident on the same node, then $F$ is contiguous in any $st$-planar embedding of $G$.

By definition, $\flowvit(F) = \maxflow(G) - \maxflow(G-F)$ and, thanks to the Min-Cut Max-Flow theorem and the result in~\cite{Hassin},  $\flowvit(F) = \dist_{G^{*}}(U^{*},L^{*}) - \dist_{(G-F)^{*}}(U^{*},L^{*})$.
The dual graph $(G-F)^{*}$ is obtained from $G^{*}$ by setting to 0 the length of all arcs in $F$. Let $K^{*}$ be the set of endpoints of all arcs in $F^{*}$.
If $G$ is undirected, then $K^{*}$ is connected. If $G$ is directed,
since for each arc $(x^{*}, y^{*})$ in $F^{*}$ a reverse arc $(y^{*}, x^{*})$ exists with $w(y^{*}, x^{*}) = 0$, then  $K^{*}$ is strongly connected.
In any case, each node in $K^{*}$ has the same distance from/to any given node in $G^{*}$, including $U^{*}$ and  $L^{*}$. Thus, $\dist_{(G-F)^{*}}(U^{*},x^{*}) = \min_{v^{*} \in K^{*}}(\dist_{G}(U^{*},v^{*}))$, for each $x^{*} \in K^{*}$.

It follows that $\flowvit(F)$ can be obtained again using only distances in $G^{*}$.
In fact,
\begin{equation}\label{eq:mindist}
\dist_{(G-F)^{*}}(U^{*},L^{*}) = 
\min \left\{
	\begin{array}{l}
	\dist_{G^{*}}(U^{*},L^{*})\\
     \dist_{G^{*}}(U^{*},K^{*}) + \dist_{G^{*}}(K^{*},L^{*})
     \end{array}
\right\}
\end{equation}
since the portion of any path in $(G-F)^{*}$ inside $K^{*}$ has zero length.
By definition, $ \dist_{G^{*}}(U^{*},K^{*})$ and $\dist_{G^{*}}(K^{*},L^{*})$ can be computed in $O(|K|)$ time as $\min_{a^{*} \in K^{*}}\{\dist_{G^{*}}(U^{*},a^{*})\}$ and\\
$\min_{b^{*} \in K^{*}}\{\dist_{G^{*}}(b^{*},L^{*})\}$, respectively.

The above argument yields the following theorem:
\begin{theorem}\label{th:contiguous}
Given an $st$-planar embedding of a directed or undirected $st$-planar graph $G$, it is possible to preprocess $G$ in $O(n)$ worst-case time so that, for any set of contiguous arcs $F$, $\flowvit(F)$ can be answered in $O(|F|)$ worst-case time.
\end{theorem}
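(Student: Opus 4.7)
The plan is to turn the derivation immediately preceding the theorem into an explicit preprocessing/query scheme. First I would build the dual graph $G^{*}$ (including the zero-length reverse arcs in the directed case) in $O(n)$ time, exploiting $m = O(n)$ for planar graphs. Then I would run the planar SSSP algorithm of Henzinger et al.~\cite{Henzinger} twice in linear time: once rooted at $U^{*}$ in $G^{*}$, and once rooted at $L^{*}$ in the arc-reversal of $G^{*}$. For every face $f$ of $G$ I store the two distances $d_U(f^{*}) := \dist_{G^{*}}(U^{*},f^{*})$ and $d_L(f^{*}) := \dist_{G^{*}}(f^{*},L^{*})$, together with the scalar $D := \dist_{G^{*}}(U^{*},L^{*})$. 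This preprocessing runs in $O(n)$ time and uses $O(n)$ space.

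For a query on a contiguous arc set $F$, I would iterate over $F$ and collect the set $K^{*}$ of endpoints in $G^{*}$ of the dual arcs $\{e^{*} : e \in F\}$; since each dual arc contributes at most two endpoints, $|K^{*}| \leq 2|F|$. Using the precomputed tables I then evaluate
\[
\alpha \;=\; \min_{v^{*}\in K^{*}} d_U(v^{*}), \qquad \beta \;=\; \min_{v^{*}\in K^{*}} d_L(v^{*}),
\]
in $O(|F|)$ time, and return
\[
\flowvit(F) \;=\; D \;-\; \min\{D,\ \alpha + \beta\} \;=\; \max\{0,\ D - (\alpha+\beta)\},
\]
which is exactly the quantity delivered by equation~(\ref{eq:mindist}) together with the identity $\flowvit(F) = \dist_{G^{*}}(U^{*},L^{*}) - \dist_{(G-F)^{*}}(U^{*},L^{*})$.

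The only subtlety, already handled in the discussion preceding the statement, is that the contiguity of $F$, combined with the zero-length reverse arcs present in the directed case, forces $K^{*}$ to be (strongly) connected once the lengths of the arcs in $F^{*}$ are set to zero. This is what lets us collapse every node of $K^{*}$ to a single ``nearest'' distance from $U^{*}$ and to $L^{*}$, and hence reduces $\dist_{(G-F)^{*}}(U^{*},L^{*})$ to either $D$ or $\alpha+\beta$. The main obstacle I expect is therefore purely bookkeeping: verifying that the dual construction, the two planar SSSP calls, and the per-query linear scan over $F$ together respect the claimed $O(n)$ preprocessing and $O(|F|)$ query budgets. Since each of these ingredients is already known to run within the required bounds, I anticipate no genuine technical difficulty in finalising the argument.
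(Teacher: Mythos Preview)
Your proposal is correct and follows essentially the same approach as the paper: precompute the two SSSP trees in $G^{*}$ (from $U^{*}$, and to $L^{*}$ via the reversal) using the linear-time planar algorithm of~\cite{Henzinger}, store the two distance labels per dual node, and then answer each query by scanning $K^{*}$ to evaluate equation~(\ref{eq:mindist}). The paper presents exactly this scheme in the text leading up to the theorem, so there is nothing to add.
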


The arguments leading to Theorem~\ref{th:contiguous} can be specialized in order to compute the vitality of nodes.
Deleting a node $x$ corresponds to deleting the set of all arcs incident on $x$, and this set of arcs is contiguous in any planar embedding of $G$. 
Thanks to equation~(\ref{eq:mindist}), $\dist_{(G-x)^{*}}(U^{*},L^{*})$ can be derived by computing $\dist_{G^{*}}(U^{*},N^{*}_x)$, where $N_x$ is the set of faces adjacent to $x$ and $N^*_x$ is the corresponding set of nodes in $G^*$. This is the minimum among distances (in the dual graph $G^{*}$) from $U^{*}$ to all nodes corresponding to faces surrounding $x$, and similarly from all these nodes to $L^{*}$.
Thus, after the two SSSP trees from $U^{*}$ and to $L^{*}$ have been computed in $O(n)$ time as in~\cite{Henzinger}, we can associate to each node $x$ the values of $\dist_G^{*}(U^{*},N^{*}_x)$ and $\dist_{G^{*}}(N^{*}_x, L^{*})$, still in overall $O(n)$ worst-case time. The above arguments lead to the following theorem.
\begin{theorem}\label{th:nodes}
Given an $st$-planar embedding of a directed or undirected $st$-planar graph $G$, it is possible to compute   $\flowvit(x)$, for all nodes $x$, in $O(n)$ worst-case time.
\end{theorem}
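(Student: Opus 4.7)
The plan is to reduce the removal of a node $x$ to the removal of the contiguous arc set $E_x$ of all arcs incident to $x$, so that Theorem~\ref{th:contiguous} (equivalently, equation~(\ref{eq:mindist})) applies, and then to aggregate the per-node work using a planar incidence argument.

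First, I would observe that for any $x \notin \{s,t\}$, $\maxflow(G-x) = \maxflow(G - E_x)$, since any feasible flow in $G$ that avoids $x$ is a feasible flow in $G - E_x$ and vice versa. In an $st$-planar embedding, the dual arcs $E_x^*$ are precisely the arcs bounding the face $x^*$ of $G^*$, so they form a cycle around $x^*$ (strongly connected, once zero-weight reverse arcs are included in the directed case). Hence $E_x$ is contiguous and $N_x^*$---its set of dual endpoints, i.e.\ the faces of $G$ incident to $x$---is (strongly) connected in $(G-E_x)^*$. The degenerate cases $x \in \{s,t\}$ can be handled in constant time by convention.

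Second, applying equation~(\ref{eq:mindist}) with $F = E_x$ and $K^* = N_x^*$ yields
\[
\flowvit(x) = \dist_{G^*}(U^*,L^*) - \min\!\left\{\,\dist_{G^*}(U^*,L^*),\ \dist_{G^*}(U^*,N_x^*) + \dist_{G^*}(N_x^*,L^*)\,\right\},
\]
where $\dist_{G^*}(U^*,N_x^*) = \min_{f^* \in N_x^*}\dist_{G^*}(U^*,f^*)$ and symmetrically for $\dist_{G^*}(N_x^*,L^*)$. Therefore, once the two planar SSSP trees in $G^*$---one rooted at $U^*$ and one rooted at $L^*$ in the reversal---have been computed in $O(n)$ time via~\cite{Henzinger}, evaluating $\flowvit(x)$ reduces to taking two minima over $N_x^*$ plus a constant amount of arithmetic.

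Finally, I would bound the total cost by a standard face-vertex incidence argument: the per-node work is $O(|N_x|)$, and $\sum_{x \in N} |N_x|$ counts node-face incidences in the embedding, which equals $\sum_f |\partial f| \le 2m = O(n)$ by Euler's formula. Together with the $O(n)$ preprocessing, this yields the claimed $O(n)$ bound. The only real point of care is this aggregation: an individual $|N_x|$ can be as large as $\Theta(n)$, so the naive per-node application of Theorem~\ref{th:contiguous} must be charged against the global incidence budget rather than evaluated node-by-node.
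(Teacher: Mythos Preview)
Your proposal is correct and follows essentially the same approach as the paper: reduce node removal to removal of the contiguous incident-arc set, invoke equation~(\ref{eq:mindist}) with $K^{*}=N_x^{*}$, and read off the two minima from the precomputed SSSP distances in $G^{*}$. The only difference is that you make the $\sum_x |N_x| = O(n)$ aggregation explicit via a node--face incidence count, whereas the paper simply asserts the overall $O(n)$ bound; your extra detail is welcome and correct.
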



\section{Conclusions and further work}\label{se:conclusions}
In this paper we have shown how to solve the problem of computing the vitality of all arcs and all nodes  with respect to the maximum flow for general undirected graphs and for $st$-planar graphs. 
For general undirected graphs, we compute the vitality of all arcs applying $O(n)$ time a max-flow algorithm.
In $st$-planar graphs (directed and undirected) our algorithm runs in optimal $O(n)$ worst-case time.
Some points are left open.

First, we point out that the technique we use for general undirected graphs, based on ancestor trees, cannot be directly applied to the directed case, since in~\cite{ancestor} it is implicitly assumed that the objective function $f$ defining minimum cuts is symmetric, i.e., $f(A,B) = f(B,A)$. Therefore, ancestor trees do not necessarily exist in the directed case and, for the time being, we do not know of any non-trivial solution  for computing the vitalities of all arcs---i.e., better than the $O(m \cdot \mbox{MF}(m,n))$ worst-case time obtained by applying a standard max-flow algorithm after deleting each single arc.

In the second place, the general undirected and directed planar cases are also open. 
In fact, the divide and conquer technique by Reif~\cite{Reif} for finding min-cuts in undirected planar graphs cannot be directly applied to compute the vitality of arcs.
Moreover, as it is the case for general graphs, also for planar graphs minimum cuts have a richer structure in the directed case than in the undirected one.  As shown in~\cite{Lacki},
in the directed case the set of minimum $xy$-cuts defined by all pairs of nodes $x,y$ may contain $\Theta(n^{2})$ different minimum cuts, while there are always at most $n-1$ different $xy$-cuts in the undirected case.

\section*{Acknowledgements}
We wish to thank Nicola Apollonio for many deep discussions on this topic. We are grateful to the anonymous reviewers for their invaluable comments, and for pointing out a flaw in a previous version of the paper.
\bibliographystyle{plain}
\bibliography{vital}

\begin{thebibliography}{10}

\bibitem{Ahuja}
R.~K. Ahuja, T.~L. Magnanti, and J.~B. Orlin.
\newblock {\em Network flows - theory, algorithms and applications}.
\newblock Prentice Hall, 1993.

\bibitem{Alderson}
L.~Alderson, G.~G. Brown, W.~M. Carlyle, and L.~A. {Cox, Jr.}
\newblock Sometimes there is no ``most-vital'' arc: Assessing and improving the
  operational resilience of systems.
\newblock {\em Military Operations Research}, 18(1):21--37, 2013.

\bibitem{Altner}
D.~S. Altner, \"{O}. Ergun, and N.~A. Uhan.
\newblock The maximum flow network interdiction problem: Valid inequalities,
  integrality gaps, and approximability.
\newblock {\em Oper. Res. Lett.}, 38(1):33--38, January 2010.

\bibitem{Aneja}
Y.~P. Aneja, R.~Chandrasekaran, and K.~Nair.
\newblock Maximizing residual flow under an arc destruction.
\newblock {\em Networks}, 38(4):194--198, 2001.

\bibitem{Ausiello}
G.~Ausiello, P.~G. Franciosa, G.~F. Italiano, and A.~Ribichini.
\newblock On resilient graph spanners.
\newblock {\em Algorithmica}, 74(4):1363--1385, 2016.

\bibitem{Bar-noy}
A.~Bar-Noy, S.~Khuller, and B.~Schieber.
\newblock The complexity of finding most vital arcs and nodes.
\newblock Technical report, College Park, MD, USA, 1995.

\bibitem{Borradaile}
G.~Borradaile and P.~Klein.
\newblock An {O}($n$ log $n$) algorithm for maximum $st$-flow in a directed
  planar graph.
\newblock {\em J. ACM}, 56(2):9:1--9:30, April 2009.

\bibitem{ancestor}
C.~K. Cheng and T.~C. Hu.
\newblock Ancestor tree for arbitrary multi-terminal cut functions.
\newblock {\em Annals of Operations Research}, 33(3):199--213, 1991.

\bibitem{CorleySha}
H.W. Corley and D.~Y. Sha.
\newblock Most vital links and nodes in weighted networks.
\newblock {\em Oper. Res. Lett.}, 1(4):157--160, September 1982.

\bibitem{Corley}
H.~W. {Corley, Jr.} and H.~Chang.
\newblock Finding the $n$ most vital nodes in a flow network.
\newblock {\em Management Science}, 21(3):362--364, 1974.

\bibitem{Du}
D.~Du and R.~Chandrasekaran.
\newblock The maximum residual flow problem: {NP}-hardness with two-arc
  destruction.
\newblock {\em Networks}, 50(3):181--182, 2007.

\bibitem{Fulkerson}
L.~R. {Ford, Jr.} and D.~R. Fulkerson.
\newblock Maximal flow through a network.
\newblock {\em Canadian Journal of Mathematics}, 8:399--404, 1956.

\bibitem{Gomory}
R.~E. Gomory and T.~C. Hu.
\newblock Multi--terminal network flows.
\newblock {\em Journal of the Society for Industrial and Applied Mathematics},
  9(4):551--570, 1961.

\bibitem{Gusfield}
D.~Gusfield.
\newblock Very simple methods for all pairs network flow analysis.
\newblock {\em SIAM Journal on Computing}, 19(1):143--155, 1990.

\bibitem{Hartvigsen01}
D.~Hartvigsen.
\newblock Compact representations of cuts.
\newblock {\em {SIAM} Journal on Discrete Mathematics}, 14(1):49--66, 2001.

\bibitem{Hassin}
R.~Hassin.
\newblock Maximum flow in $(s, t)$ planar networks.
\newblock {\em Inf. Process. Lett.}, 13(3):107, 1981.

\bibitem{Hassin88}
R.~Hassin.
\newblock Solution bases of multiterminal cut problems.
\newblock {\em Mathematics of Operations Research}, 13(4):535--542, 1988.

\bibitem{Hassin90}
R.~Hassin.
\newblock An algorithm for computing maximum solution bases.
\newblock {\em Operations Research Letters}, 9(5):315--318, 1990.

\bibitem{Hassin85}
R.~Hassin and D.~B. Johnson.
\newblock An ${O}(n \log^2 n)$ algorithm for maximum flow in undirected planar
  networks.
\newblock {\em {SIAM} Journal on Computing}, 14(3):612--624, 1985.

\bibitem{Henzinger}
M.~R. Henzinger, P.~Klein, S.~Rao, and S.~Subramanian.
\newblock Faster shortest-path algorithms for planar graphs.
\newblock {\em Journal of Computer and System Sciences}, 55(1):3--23, 1997.

\bibitem{Hopcroft74}
J.~E. Hopcroft and R.~E. Tarjan.
\newblock Efficient planarity testing.
\newblock {\em J. {ACM}}, 21(4):549--568, 1974.

\bibitem{Itai}
A.~Itai and Y.~Shiloach.
\newblock Maximum flow in planar networks.
\newblock {\em SIAM Journal on Computing}, 8(2):135--150, 1979.

\bibitem{Italiano}
G.~F. Italiano, Y.~Nussbaum, P.~Sankowski, and C.~Wulff-Nilsen.
\newblock Improved algorithms for min cut and max flow in undirected planar
  graphs.
\newblock In {\em Proceedings of the Forty-third Annual ACM Symposium on Theory
  of Computing}, STOC '11, pages 313--322, New York, NY, USA, 2011. ACM.

\bibitem{Kaplan}
H.~Kaplan and Y.~Nussbaum.
\newblock Minimum s-t cut in undirected planar graphs when the source and the
  sink are close.
\newblock In {\em 28th International Symposium on Theoretical Aspects of
  Computer Science, {STACS} 2011, March 10-12, 2011, Dortmund, Germany}, pages
  117--128, 2011.

\bibitem{KingRaoTarjan}
V.~King, S.~Rao, and R.~Tarjan.
\newblock A faster deterministic maximum flow algorithm.
\newblock {\em Journal of Algorithms}, 17(3):447--474, 1994.

\bibitem{dagstuhl}
D.~Kosch{\"u}tzki, K.~A. Lehmann, L.~Peeters, S.~Richter, D.~Tenfelde-Podehl,
  and O.~Zlotowski.
\newblock {\em Centrality Indices}, pages 16--61.
\newblock Springer Berlin Heidelberg, Berlin, Heidelberg, 2005.

\bibitem{Lacki}
J.~\L{}\c{a}cki, Y.~Nussbaum, P.~Sankowski, and C.~Wulff{-}Nilsen.
\newblock Single source - all sinks max flows in planar digraphs.
\newblock In {\em 53rd Annual {IEEE} Symposium on Foundations of Computer
  Science, {FOCS} 2012, New Brunswick, NJ, USA, October 20-23, 2012}, pages
  599--608, 2012.

\bibitem{Lin}
K.-C. Lin and M.-S. Chern.
\newblock The fuzzy shortest path problem and its most vital arcs.
\newblock {\em Fuzzy Sets and Systems}, 58(3):343--353, 1993.

\bibitem{Malik}
K.~Malik, A.~K. Mittal, and S.~K. Gupta.
\newblock The $k$ most vital arcs in the shortest path problem.
\newblock {\em Operations Research Letters}, 9(4):223--227, 1989.

\bibitem{McMasters}
A.~W. McMasters and T.~M. Mustin.
\newblock Optimal interdiction of a supply network.
\newblock {\em Naval Research Logistics Quarterly}, 17(3):261--268, 1970.

\bibitem{MOROWATI}
S.~Morowati-Shalilvand and J.~Mehri-Tekmeh.
\newblock Finding most vital links over time in a flow network.
\newblock {\em An International Journal of Optimization and Control: Theories
  \& Applications (IJOCTA)}, 2(2):173--186, 2012.

\bibitem{NardelliEdge}
E.~Nardelli, G.~Proietti, and P.~Widmayer.
\newblock A faster computation of the most vital edge of a shortest path.
\newblock {\em Inf. Process. Lett.}, 79(2):81--85, 2001.

\bibitem{NardelliNode}
E.~Nardelli, G.~Proietti, and P.~Widmayer.
\newblock Finding the most vital node of a shortest path.
\newblock {\em Theor. Comput. Sci.}, 296(1):167--177, 2003.

\bibitem{Orlin}
J.~B. Orlin.
\newblock Max flows in ${O}(nm)$ time, or better.
\newblock In {\em Proceedings of the Forty-fifth Annual ACM Symposium on Theory
  of Computing}, STOC '13, pages 765--774, New York, NY, USA, 2013. ACM.

\bibitem{PhillipsSTOC93}
C.~A. Phillips.
\newblock The network inhibition problem.
\newblock In {\em Proceedings of the Twenty-Fifth Annual {ACM} Symposium on
  Theory of Computing, May 16-18, 1993, San Diego, CA, {USA}}, pages 776--785,
  1993.

\bibitem{Ratliff}
H.~D. Ratliff, G.~T. Sicilia, and S.~H. Lubore.
\newblock Finding the $n$ most vital links in flow networks.
\newblock {\em Management Science}, 21(5):531--539, 1975.

\bibitem{Reif}
J.~H. Reif.
\newblock Minimum $s-t$ cut of a planar undirected network in ${O}(n\log ^2
  (n))$ time.
\newblock {\em SIAM Journal on Computing}, 12(1):71--81, 1983.

\bibitem{Stahlberg}
M.~Stahlberg.
\newblock Finding the most vital edges for shortest paths - algorithms and
  complexity for special graph classes, 2016.
\newblock Bachelor Thesis, Technische Universit{\"a}t Berlin.

\bibitem{Wollmer63}
R.~D. Wollmer.
\newblock Some methods for determining the most vital link in a railway
  network.
\newblock Technical report, RAND Corporation, 1963.

\bibitem{Wood}
R.~K. Wood.
\newblock Deterministic network interdiction.
\newblock {\em Math. Comput. Model.}, 17(2):1--18, January 1993.

\end{thebibliography}

\end{document}